\newtheorem{theorem}{Theorem}[section]
\newtheorem{proposition}{Proposition}[theorem]
\newtheorem{corollary}{Corollary}[theorem]
\newtheorem{lemma}[theorem]{Lemma}
\begin{document}
\preprint{APS/123-QED}
\title{A rigorous model reduction for the anisotropic-scattering transport process}
\author{Yuan Hu}
\email{yuanhu@pku.org.cn}
\author{Chang Liu}
\email{liuchang@iapcm.ac.cn}
\author{Huayun Shen}
\email{shen\_huayun@iapcm.ac.cn}
\affiliation{Institute of Applied Physics and Computational Mathematics, Beijing, China}
\date{\today}
\begin{abstract}
In this letter, we propose a reduced-order model to bridge the particle transport mechanics and 
the macroscopic fluid dynamics in the highly scattered regime.
A rigorous mathematical derivation and a concise physical interpretation 
are presented for an anisotropic-scattering transport process 
with arbitrary order of scattering kernel.
The prediction of the theoretical model perfectly agrees with the numerical experiments.
A clear picture of the diffusion physics is revealed for the neutral particle transport in the asymptotic optically thick regime.
\end{abstract}
\maketitle

\section{Introduction}\label{Introduction}
In the statistic physics of particle transport, one objective is to construct a solid mathematical basis and 
axiomatize the multiscale hierarchy modeling of the physics of particle transport \cite{boyle2023boltzmann}.
In 1872, Boltzmann devised the Boltzmann transport equation, 
in which the positions and momenta of particles are characterized by the probability distribution function.
The Boltzmann equation bridges the microscopic Lagrangian mechanics to the mesoscopic kinetic mechanics
and reduces the degree of freedom of the system to seven dimensions, i.e., 
the physical space, the velocity space, and time.
The asymptotic theories developed by Hilbert, Chapman, and Cowling in the 1910s
provide a framework for analyzing the structure of the probability distribution function, 
based on which a low-dimensional model of the macroscopic hydrodynamic system is derived 
in the asymptotic limit of the continuum regime \cite{chapman1990mathematical},
and conscious progress has been made during the past decades \cite{hu2022relativistic,wang2020continuum}.
The asymptotic analysis reveals that the physics of momentum and energy diffusion is 
the particle momentum and energy exchange, 
and the closure modeling of the distribution function is based on 
a balance of the particle streaming and the particle collision processes. 
The macroscopic viscous and diffusion coefficients are precisely related to 
the microscopic particle interaction, including the potential and the cross-section. 
In the applications of aerodynamics and high-energy-density engineering, 
The asymptotic low-order models contribute to the development of efficient numerical methods 
for the simulation of the rarefied gas dynamics and the transport of plasma and neutral particles.
For the transport process of neutral particles, 
Larsen et al. developed the asymptotic low-order diffusion models
in the optically thick regime with an isotropic scattering
\cite{larsen1974asymptotic,larsen1987asymptotic}.
The rigorous mathematical analysis for the anisotropic-scattering transport process is still open.
In this letter, we give a rigorous mathematical derivation of the asymptotic model 
for the general anisotropic-scattering transport process. 
The mathematical derivation, physical interpretation, and numerical experiments are presented.

\section{Mathematical derivation}\label{theory}
We analyze the radiation transport equation for the anisotropic-scattering photon transport grey model here, 
while the derivation and conclusions are also appropriate for other neutral particles, such as neutrons. 
Considering a pure scattering medium and leaving out the other source terms, 
the grey thermal radiation transport equation can be written in the following scaled form,
\begin{equation} \label{Equation_2_1_}
\left\{
\begin{aligned}
&\frac{\partial I}{\partial t} + \frac{c}{\varepsilon } \vec{\Omega }\cdot \nabla I = \frac{c \sigma _{s}}{\varepsilon^{2}}  \left(S-I \right),\\
&S = \int p\left(\vec{\Omega }^{\prime} \rightarrow \vec{\Omega } \right)I \left( \vec{\Omega }^{\prime} \right)d\vec{\Omega }^{\prime}.
\end{aligned}
\right.
\end{equation}
Here $I$ is the radiation intensity, $c$ is the speed of light, 
$\sigma _{s} $ is the scattering coefficient, $S$ is the scattering source,
$\vec{\Omega }$ and $\vec{\Omega }^{\prime}$ are the outgoing and incoming direction angular variable, 
$p\left(\vec{\Omega }^{\prime} \rightarrow \vec{\Omega } \right)$ is the scattering phase function,
$\varepsilon$ is the Knudsen number, i.e., the ratio between the local mean free path and the characteristic length. 
The scattering kernel $p\left(\vec{\Omega }^{\prime} \rightarrow \vec{\Omega } \right)$ can be expanded in the Legendre polynomial space as
\begin{equation}\label{eq_legendre}
p\left(\vec{\Omega }^{\prime} \rightarrow \vec{\Omega } \right)=
\sum _{l=0}^{L} C_l P_l\left(\vec{\Omega }^{\prime}\cdot\vec{\Omega }\right),
\end{equation}
where $C_l$ is the coefficient of the $l$-order Legendre basis.

\begin{lemma} \label{source}
The scattering source $S$ is related to the moments of $I$ by
\begin{equation} \label{Equation_2_2_}
\begin{aligned}
& S = \sum_{l=0}^{L} C_{l}^{\ast} \sum_{i_1 i_2 \cdots i_l} 
\left( \Omega_{i_1 i_2 \cdots i_l}^{\left(l\right)} J_{i_1 i_2 \cdots i_l}^{\left(l\right)} \right) \\  
\end{aligned}
\end{equation}
where $\Omega_{i_1 i_2 \cdots i_l}^{\left(l\right)} = \Omega_{i_1} \Omega_{i_2}\cdots \Omega_{i_l}$,
the index $i_k\in\{x,y,z\}$ 
with the corresponding direction cosine $\Omega_{i_k} \in \{\mu,\xi,\zeta \}$ 
and $k \in \left[0,l \right]$.
$J^{\left(l \right)}$ is the $l$th-order moments of $I$, 
and $J_{i_1 i_2 \cdots i_l}^{\left(l\right)}$ is the $\{i_1 i_2 \cdots i_l\}$ component of $J^{(l)}$,
$\sum_{i_1 i_2 \cdots i_l}$ means the summation for all possible combinations of index $\{i_1 i_2 \cdots i_l\}$.
\end{lemma}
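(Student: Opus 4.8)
The plan is to reduce the angular integral defining $S$ to a finite sum of Cartesian moments of $I$ by expanding the scattering kernel entirely in the components of $\vec{\Omega}$ and $\vec{\Omega}^{\prime}$. First I would substitute the Legendre expansion \eqref{eq_legendre} into $S=\int p(\vec{\Omega}^{\prime}\rightarrow\vec{\Omega})\,I(\vec{\Omega}^{\prime})\,d\vec{\Omega}^{\prime}$ and exchange the finite sum with the integral. The key observation is elementary: writing the cosine of the scattering angle as $\vec{\Omega}^{\prime}\cdot\vec{\Omega}=\sum_{i}\Omega_{i}\Omega_{i}^{\prime}$ and expanding the $k$-th power by the multinomial formula \emph{without} collecting like terms gives
\begin{equation}\label{eq_power}
\bigl(\vec{\Omega}^{\prime}\cdot\vec{\Omega}\bigr)^{k}=\Bigl(\sum_{i}\Omega_{i}\Omega_{i}^{\prime}\Bigr)^{k}=\sum_{i_{1}i_{2}\cdots i_{k}}\bigl(\Omega_{i_{1}}\Omega_{i_{2}}\cdots\Omega_{i_{k}}\bigr)\bigl(\Omega_{i_{1}}^{\prime}\Omega_{i_{2}}^{\prime}\cdots\Omega_{i_{k}}^{\prime}\bigr)=\sum_{i_{1}\cdots i_{k}}\Omega^{(k)}_{i_{1}\cdots i_{k}}\,\Omega_{i_{1}}^{\prime}\cdots\Omega_{i_{k}}^{\prime},
\end{equation}
the sum running over all ordered $k$-tuples of indices $i_{m}\in\{x,y,z\}$, which is precisely the summation convention in the statement. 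Thus every power of the dot product is already a contraction of a rank-$k$ tensor built from $\vec{\Omega}$ against a rank-$k$ tensor built from $\vec{\Omega}^{\prime}$.

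Next I would insert the standard power-series form of the Legendre polynomial, $P_{l}(x)=\sum_{k}a^{(l)}_{k}x^{k}$ with the sum over $k=l,l-2,l-4,\dots$ and $a^{(l)}_{k}=2^{-l}(-1)^{(l-k)/2}\binom{l}{(l-k)/2}\binom{l+k}{k}$. Combining this with \eqref{eq_power}, pulling the $\vec{\Omega}$-dependent monomials out of the integral over $\vec{\Omega}^{\prime}$, and using $\int\Omega_{i_{1}}^{\prime}\cdots\Omega_{i_{k}}^{\prime}\,I(\vec{\Omega}^{\prime})\,d\vec{\Omega}^{\prime}=J^{(k)}_{i_{1}\cdots i_{k}}$, one obtains
\begin{equation}\label{eq_byorder}
S=\sum_{l=0}^{L}C_{l}\sum_{\substack{k\le l\\ k\equiv l\,(\mathrm{mod}\,2)}}a^{(l)}_{k}\sum_{i_{1}\cdots i_{k}}\Omega^{(k)}_{i_{1}\cdots i_{k}}J^{(k)}_{i_{1}\cdots i_{k}}.
\end{equation}
Finally I would interchange the order of the two outer sums and collect all terms carrying a given moment order, i.e. set $C_{l}^{\ast}=\sum_{m\ge l,\ m\equiv l\,(\mathrm{mod}\,2)}C_{m}\,a^{(m)}_{l}$. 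Because $a^{(m)}_{l}=0$ for $l>m$, the moment order never exceeds $L$, so \eqref{eq_byorder} turns into exactly \eqref{Equation_2_2_}.

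I do not expect a deep obstacle here: once \eqref{eq_power} is written down the argument is pure bookkeeping. The two places that demand care are (i) matching the summation convention — the raw multinomial expansion sums over ordered $k$-tuples, equivalently over distinct multi-indices weighted by their multinomial multiplicities, and no symmetry factor should be lost or double counted when this is identified with $\sum_{i_{1}\cdots i_{l}}$; and (ii) the reindexing that defines $C_{l}^{\ast}$, where one must check that the relabelled sum still terminates at $l=L$ and that the even/odd parity structure is respected. If a single step deserves to be called the hard part, it is getting the Legendre coefficients $a^{(l)}_{k}$ (hence the explicit $C_{l}^{\ast}$) right; I would guard against slips by verifying the low-order cases by hand, e.g. $C_{0}^{\ast}=C_{0}-\tfrac12 C_{2}+\tfrac38 C_{4}-\cdots$, $C_{1}^{\ast}=C_{1}-\tfrac32 C_{3}+\cdots$, and $C_{2}^{\ast}=\tfrac32 C_{2}-\cdots$, which correctly reproduce the single-$P_{l}$ kernels for $l=0,1,2$.
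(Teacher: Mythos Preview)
Your argument is correct and follows essentially the same route as the paper: rewrite the Legendre expansion of the phase function as a finite power series in $\vec{\Omega}^{\prime}\!\cdot\vec{\Omega}$, expand each power by the multinomial identity \eqref{eq_power}, and identify the resulting angular integrals as the Cartesian moments $J^{(l)}$. The only difference is cosmetic ordering---the paper introduces $C_{l}^{\ast}$ at the outset via $p=\sum_{l}C_{l}^{\ast}(\vec{\Omega}^{\prime}\!\cdot\vec{\Omega})^{l}$ without spelling out the conversion, whereas you carry the Legendre form through and then reindex, supplying the explicit formula $C_{l}^{\ast}=\sum_{m\ge l,\ m\equiv l}C_{m}a^{(m)}_{l}$ that the paper leaves implicit.
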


\begin{proof}
The scattering phase function \eqref{eq_legendre} can be further written as the sum of a finite series of polynomials:
\begin{equation} \label{Equation_2_3_}
\begin{aligned}
p\left(\vec{\Omega }^{\prime} \rightarrow \vec{\Omega } \right) 
= \sum _{l=0}^{L} C_{l}^{\ast} \left(\vec{\Omega }^{\prime}\cdot\vec{\Omega }\right) ^{l}
= \sum _{l=0}^{L} C_{l}^{\ast} \left(\cos\theta \right) ^{l}.
\end{aligned}
\end{equation}
Here $ C_{l}^{\ast} $ are the corresponding coefficients of $l$th polynomial $\left(\vec{\Omega }^{\prime}\cdot\vec{\Omega }\right) ^{l}$, 
$\theta$ is the scattering angle between $\vec{\Omega }^{\prime}$ and $\vec{\Omega }$. 
Thus, the scattering source $S$ can be written as:
\begin{equation} \label{Equation_2_4_}
S = \sum _{l=0}^{L} C_{l}^{\ast} \left( \int \left(\vec{\Omega }^{\prime}\cdot\vec{\Omega }\right) ^{l} I \left(\vec{\Omega }^{\prime} \right) d\vec{\Omega }^{\prime}  \right).
\end{equation}
In three-dimensional Cartesian velocity space, we have 
\begin{equation}
\begin{aligned}
\left(\vec{\Omega }^{\prime}\cdot\vec{\Omega }\right) ^{l}= & \left(\mu \mu^{\prime} + \xi \xi^{\prime} + \zeta \zeta^{\prime} \right) ^{l} 
= \sum_{i_1 i_2 \cdots i_l} \Omega_{i_1 i_2 \cdots i_l}^{\left(l\right)} 
\Omega_{i_1 i_2 \cdots i_l}^{\prime \left(l\right)}, 
\end{aligned}
\end{equation}
with 
$\Omega_{i_1 i_2 \cdots i_l}^{\prime \left(l\right)} = 
\Omega_{i_1}^{\prime} \Omega_{i_2}^{\prime}\cdots \Omega_{i_l}^{\prime}$ and
$\Omega_{i_k}^{\prime} \in \{\mu^{\prime},\xi^{\prime},\zeta^{\prime} \}$.
Therefore, the angular integration term in \eqref{Equation_2_4_} for arbitrary $l$ is:
\begin{equation} \label{Equation_2_5_}
\begin{aligned}
&\int \left(\vec{\Omega }^{\prime}\cdot\vec{\Omega }\right) ^{l} I \left(\vec{\Omega }^{\prime} \right) d\vec{\Omega }^{\prime}
= \sum_{i_1 i_2 \cdots i_l} \left( \Omega_{i_1 i_2 \cdots i_l}^{\left(l\right)} J_{i_1 i_2 \cdots i_l}^{\left(l\right)} \right), \\
\end{aligned}
\end{equation}
with 
$J_{i_1 i_2 \cdots i_l}^{\left(l\right)} = 
\int \Omega_{i_1 i_2 \cdots i_l}^{\prime \left(l\right)} 
I \left(\vec{\Omega }^{\prime} \right) d\vec{\Omega }^{\prime} $.
Substituting \eqref{Equation_2_5_} into equation \eqref{Equation_2_4_} gives equation \eqref{Equation_2_2_}.
\end{proof}

\begin{proposition} \label{S_moments}
Define $S^{\left(q \right)}$ as the $q$th-order moments of $S$.
The odd-order moments $S^{(n=odd)}$ is only related to $J^{(l=odd)}$,
and the even-order moments  $S^{(m=even)}$ is only related to $J^{(l=even)}$.
\end{proposition}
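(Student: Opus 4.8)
The plan is to read the parity constraint directly off the tensor-moment representation of $S$ established in Lemma~\ref{source}. First I would write the $q$th-order moment of $S$ as the angular integral $S^{(q)}_{j_1\cdots j_q} = \int \Omega_{j_1}\cdots\Omega_{j_q}\, S(\vec{\Omega})\, d\vec{\Omega}$, insert the expression \eqref{Equation_2_2_} for $S$, and interchange the finite sum over $l$ with the integration, which is legitimate since $L$ is finite. This yields
\begin{equation}
S^{(q)}_{j_1 \cdots j_q} = \sum_{l=0}^{L} C_{l}^{\ast} \sum_{i_1 \cdots i_l} \left( \int \Omega_{j_1} \cdots \Omega_{j_q}\, \Omega_{i_1} \cdots \Omega_{i_l}\, d\vec{\Omega} \right) J^{(l)}_{i_1 \cdots i_l},
\end{equation}
so that the coefficient coupling $S^{(q)}$ to a given component of $J^{(l)}$ is simply the integral over the full solid angle of a monomial in the direction cosines $\mu,\xi,\zeta$ of total degree $q+l$.

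The key step — and essentially the whole content of the proposition — is a parity argument for that monomial integral. The solid-angle measure $d\vec{\Omega}$ is invariant under the antipodal map $\vec{\Omega}\mapsto-\vec{\Omega}$, while each linear factor $\Omega_{i_k}$ and $\Omega_{j_k}$ changes sign under it; hence the integrand picks up the factor $(-1)^{q+l}$, and the integral must vanish whenever $q+l$ is odd, i.e.\ whenever $l$ and $q$ have opposite parity. Consequently, in the sum above only the terms with $l\equiv q \pmod 2$ survive: $S^{(q)}$ with $q$ odd is coupled only to $J^{(l)}$ with $l$ odd, and $S^{(q)}$ with $q$ even only to $J^{(l)}$ with $l$ even, which is exactly the claim.

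The one point that needs a little care is the index bookkeeping: the argument must hold uniformly over every component $\{j_1\cdots j_q\}$ of $S^{(q)}$ and every component $\{i_1\cdots i_l\}$ of $J^{(l)}$, regardless of how many times each of $\mu,\xi,\zeta$ appears in the generic monomial $\prod_k \Omega_{j_k}\prod_k \Omega_{i_k}$. This is automatic, since only the \emph{total} degree $q+l$ governs the sign under $\vec{\Omega}\mapsto-\vec{\Omega}$, not the individual multiplicities. I do not expect a genuine obstacle here; the main subtlety worth flagging is that the moments $J^{(l)}$ themselves carry no definite parity, so the proposition is properly a statement about the coupling coefficients (the vanishing of odd-degree sphere integrals), not about any symmetry of the underlying intensity $I$.
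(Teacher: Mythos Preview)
Your proposal is correct and follows essentially the same route as the paper: both insert the tensor-moment representation of $S$ from Lemma~\ref{source} into the definition of $S^{(q)}$ and then kill the cross-parity terms by a parity argument on the angular integral $\langle \Omega^{(q)}\Omega^{(l)}\rangle$. The only cosmetic difference is that the paper invokes the componentwise fact $\langle \mu^{r}\xi^{s}\zeta^{t}\rangle=0$ unless $r,s,t$ are all even (hence the total degree must be even), whereas you use the single antipodal symmetry $\vec{\Omega}\mapsto-\vec{\Omega}$ to reach the same conclusion directly.
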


\begin{proof}
Based on Lemma \ref{source}, the components of $S^{(q)}$ is given by:
\begin{equation} \label{Equation_2_6_}
S_{i_1 i_2 \cdots i_q}^{\left(q \right)} = 
\sum_{l=0}^{L} C_{l}^{\ast} \sum_{i_1 i_2 \cdots i_l} 
\left(\left<\Omega_{i_1 i_2 \cdots i_q}^{\left(q\right)} 
\Omega_{i_1 i_2 \cdots i_l}^{\left(l\right)} \right> J_{i_1 i_2 \cdots i_l}^{\left(l\right)} \right), 
\end{equation}
where $\left< \cdot \right>$ means angular integration.
The angular integration term has the property
\begin{equation}
    \left<\mu^{r} \xi^{s} \zeta^{t} \right>
    \left\{
    \begin{aligned}
        &\neq 0, \quad r,s,t \text{ are even numbers},\\
        &=0, \quad r,s,t \text{ has odd number}.
    \end{aligned}
    \right.
\end{equation}
Therefore, we have
\begin{equation} \label{Equation_2_7_}
\left<\Omega_{i_1 i_2 \cdots i_q}^{\left(q\right)} \Omega_{i_1 i_2 \cdots i_l}^{\left(l\right)} \right> =0, \text{ for } q+l=odd.
\end{equation}
Hence, $S^{\left(n=odd \right)}$ is only related to $J^{\left(l=odd \right)}$, 
and $S^{\left(m=even \right)}$ is only related to $J^{\left(l=even \right)}$,
\begin{equation} \label{Equation_2_8_}\footnotesize
\left\{
\begin{aligned}
& S_{i_1 i_2 \cdots i_n}^{\left(n=odd \right)} =  \sum_{l=odd}^{L} C_{l}^{\ast} \sum_{i_1 i_2 \cdots i_l}^{l=odd} \left(  \left<\Omega_{i_1 i_2 \cdots i_n}^{\left(n\right)} \Omega_{i_1 i_2 \cdots i_l}^{\left(l\right)} \right> J_{i_1 i_2 \cdots i_l}^{\left(l\right)} \right) , \\
& S_{i_1 i_2 \cdots i_m}^{\left(m=even \right)} =  \sum_{l=even}^{L} C_{l}^{\ast} \sum_{i_1 i_2 \cdots i_l}^{l=even} \left(  \left<\Omega_{i_1 i_2 \cdots i_m}^{\left(m\right)} \Omega_{i_1 i_2 \cdots i_l}^{\left(l\right)} \right> J_{i_1 i_2 \cdots i_l}^{\left(l\right)} \right)   .
\end{aligned}
\right.
\end{equation}
\end{proof}

\begin{lemma} \label{even_to_rho}
The arbitrary even-order angular moments satisfy 
$J_{i_1 i_2 \cdots i_l}^{\left(l=even\right)} = \frac{\left<\Omega_{i_1 i_2 \cdots i_l}^{\left(l\right)} \right>}{\left<\Omega^{\left(0\right)} \right>} \rho$.
\end{lemma}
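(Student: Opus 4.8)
The plan is to obtain this closure as the leading order of the Knudsen-number expansion of \eqref{Equation_2_1_}, using the spectral structure of the scattering operator together with Proposition~\ref{S_moments}. Write the scattering operator $\mathcal{L}I:=\int p(\vec{\Omega}^{\prime}\!\rightarrow\!\vec{\Omega})\,I(\vec{\Omega}^{\prime})\,d\vec{\Omega}^{\prime}-I$, so that \eqref{Equation_2_1_} reads $\tfrac{\varepsilon^{2}}{c\sigma_{s}}\partial_t I+\tfrac{\varepsilon}{\sigma_{s}}\vec{\Omega}\cdot\nabla I=\mathcal{L}I$, and insert the Hilbert expansion $I=\sum_{k\ge 0}\varepsilon^{k}I^{(k)}$. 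Matching the $O(\varepsilon^{0})$ terms gives $\mathcal{L}I^{(0)}=0$, i.e. $I^{(0)}\in\ker\mathcal{L}$.

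The central step is to pin down $\ker\mathcal{L}$ for a general anisotropic kernel. Expanding in the Legendre/spherical-harmonic basis and invoking the addition theorem, the integral part of $\mathcal{L}$ is diagonal with eigenvalue $\lambda_{l}=\tfrac{\langle\Omega^{(0)}\rangle}{2l+1}C_{l}-1$ on the degree-$l$ subspace for $l\le L$ and $\lambda_{l}=-1$ for $l>L$. Particle conservation in scattering, $\int p(\vec{\Omega}^{\prime}\!\rightarrow\!\vec{\Omega})\,d\vec{\Omega}=1$, is exactly $C_{0}=1/\langle\Omega^{(0)}\rangle$, hence $\lambda_{0}=0$; for a physically admissible (sub-critical) kernel $\lambda_{l}\ne 0$ for every $l\ge 1$. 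Thus $\ker\mathcal{L}$ is exactly the $\vec{\Omega}$-independent functions, so $I^{(0)}(\vec{\Omega})=\rho/\langle\Omega^{(0)}\rangle$ with $\rho:=J^{(0)}=\langle I^{(0)}\rangle$. Taking the $l$th moment of $I=I^{(0)}+O(\varepsilon)$ then yields $J^{(l)}_{i_1\cdots i_l}=\langle\Omega^{(l)}_{i_1\cdots i_l}I^{(0)}\rangle+O(\varepsilon)=\tfrac{\langle\Omega^{(l)}_{i_1\cdots i_l}\rangle}{\langle\Omega^{(0)}\rangle}\rho+O(\varepsilon)$ for every $l$; for even $l$ this is the claim. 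Solving the $O(\varepsilon^{1})$ relation $\mathcal{L}I^{(1)}=\tfrac{1}{\sigma_{s}}\vec{\Omega}\cdot\nabla I^{(0)}$ (whose solvability holds because $\langle\vec{\Omega}\cdot\nabla I^{(0)}\rangle=0$ by parity) shows $I^{(1)}$ is odd in $\vec{\Omega}$, so the even moments are actually accurate to $O(\varepsilon^{2})$ and the relation is taken as exact in the reduced model. Equivalently, one can stay at the moment level: the even-moment equations derived from \eqref{Equation_2_1_} read $(\mathbf{M}-\mathbf{I})\,\vec{J}^{\,even}=O(\varepsilon)$ with $\mathbf{M}$ the even-sector scattering matrix of Proposition~\ref{S_moments}; its null space is one-dimensional, spanned by $\big(\langle\Omega^{(l)}_{i_1\cdots i_l}\rangle/\langle\Omega^{(0)}\rangle\big)_{l\,\mathrm{even}}$, and the $l=0$ component fixes the scalar amplitude to be $\rho$.

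I expect the only genuine obstacle to be the non-degeneracy half of the spectral analysis, i.e. showing $\lambda_{l}\ne 0$ for $1\le l\le L$ so that $\ker\mathcal{L}$ is \emph{exactly} (not merely contains) the isotropic functions for an arbitrary-order anisotropic kernel --- this is precisely where the anisotropy enters and what distinguishes the derivation from the classical isotropic-scattering case. The remaining ingredients (parity of $I^{(1)}$, Fredholm solvability of the hierarchy, and the moment bookkeeping through Proposition~\ref{S_moments}) are routine.
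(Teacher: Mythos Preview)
Your route is genuinely different from the paper's. The paper does \emph{not} use a Hilbert expansion or any spectral analysis for this lemma; it argues purely algebraically and claims the identity as exact (the text states that Lemma~\ref{even_to_rho} holds ``for all flow regimes''). Concretely, from the normalization $\int p\,d\vec{\Omega}'=1$ the paper extracts
$\sum_{l\,\mathrm{even}} C_l^{\ast}\sum_{i_1\cdots i_l}\langle\Omega^{(l)}_{i_1\cdots i_l}\rangle\,\langle\Omega^{(l)}_{i_1\cdots i_l}\rangle/\langle\Omega^{(0)}\rangle=1$,
and from energy conservation $S^{(0)}=\rho$ together with Proposition~\ref{S_moments} it gets
$\sum_{l\,\mathrm{even}} C_l^{\ast}\sum_{i_1\cdots i_l}\langle\Omega^{(l)}_{i_1\cdots i_l}\rangle\,J^{(l)}_{i_1\cdots i_l}/\rho=1$;
it then simply matches these two scalar identities to read off $J^{(l)}_{i_1\cdots i_l}=\langle\Omega^{(l)}_{i_1\cdots i_l}\rangle\rho/\langle\Omega^{(0)}\rangle$.

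Your asymptotic argument is the more defensible one. Equality of two weighted sums does not by itself force equality of the individual summands, and the relation cannot literally hold for an arbitrary intensity (take $I$ concentrated near a single direction, where $J^{(2)}_{xx}\ne\rho/3$). What you actually prove --- $I^{(0)}$ is isotropic because $\ker\mathcal{L}$ is one-dimensional, hence the even moments equal their isotropic values up to $O(\varepsilon^{2})$ --- is exactly the statement that is used downstream in the diffusion limit, and your flagging of the non-degeneracy condition $\lambda_{l}\ne 0$ for $l\ge 1$ as the only substantive hypothesis is on point. The paper's argument is shorter and purely moment-based, but it buys that brevity at the cost of an unjustified term-by-term identification.
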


\begin{proof}
The zeroth moments of scattering source $S$ is
\begin{equation} \label{Equation_2_9_}
\begin{aligned}
 S^{(0)} = &\int _{\vec{\Omega }^{\prime}} \left( \int _{\vec{\Omega }} p\left(\vec{\Omega }^{\prime} \rightarrow \vec{\Omega } \right) d\vec{\Omega } \right) I \left(\vec{\Omega }^{\prime} \right) d\vec{\Omega }^{\prime}  = \rho,
\end{aligned}
\end{equation}
namely, the anisotropic scattering conserves the radiation energy.
Based on the normalization condition of the phase function, we have
\begin{equation}\label{eq_normal}
\begin{aligned}
&\int _{\vec{\Omega }^{\prime}} p\left(\vec{\Omega }^{\prime} \rightarrow \vec{\Omega } \right)d\vec{\Omega }^{\prime}
= \sum_{l}^{L} C_{l}^{\ast} \int _{\vec{\Omega }^{\prime}} \left(\vec{\Omega }^{\prime} \cdot \vec{\Omega } \right)^{l} d\vec{\Omega }^{\prime} \\
&= \sum_{l=even}^{L} C_{l}^{\ast} \sum_{i_1 i_2 \cdots i_{l}}^{l=even} \left( \left<\Omega_{i_1 i_2 \cdots i_{l}}^{\prime \left(l\right)} \right> \Omega_{i_1 i_2 \cdots i_{l}}^{ \left(l\right)} \right)
=1.
\end{aligned}
\end{equation}
Taking angular integration to equation \eqref{eq_normal}, we have 
\begin{equation} \label{Equation_2_9.1_}
\begin{aligned}
& \sum_{l=even}^{L} C_{l}^{\ast} \sum_{i_1 i_2 \cdots i_{l}}^{l=even} \left( \left<\Omega_{i_1 i_2 \cdots i_{l}}^{\prime \left(l\right)} \right> \left< \Omega_{i_1 i_2 \cdots i_{l}}^{ \left(l\right)} \right> \right) 
= \left<\Omega^{\left(0 \right)} \right>,
\end{aligned}
\end{equation}
which gives
\begin{equation}\label{eq_normal1}
    \sum_{l=even}^{L} C_{l}^{\ast} \sum_{i_1 \cdots i_{l}}^{l=even} \left( \left<\Omega_{i_1 \cdots i_{l}}^{ \left(l\right)} \right> \frac{\left< \Omega_{i_1 \cdots i_{l}}^{ \left(l\right)} \right>}{\left<\Omega^{\left(0 \right)} \right>}  \right) =1.
\end{equation}
Base on Proposition \ref{S_moments} and equation \eqref{Equation_2_9_}, we have
\begin{equation}\label{eq_s0}
    \rho=S^{(0)} = 
    \sum_{l=even}^{L} C_{l}^{\ast} \sum_{i_1 i_2 \cdots i_l}^{l=even} 
    \left(   \left<\Omega_{i_1 i_2 \cdots i_l}^{\left(l\right)} \right> 
    J_{i_1 i_2 \cdots i_l}^{\left(l\right)} \right),
\end{equation}
which leads to
\begin{equation}\label{eq_s1}
    \sum_{l=even}^{L} C_{l}^{\ast} \sum_{i_1 i_2 \cdots i_l}^{l=even} \left(   \left<\Omega_{i_1 i_2 \cdots i_l}^{\left(l\right)} \right> \frac{J_{i_1 i_2 \cdots i_l}^{\left(l\right)}}{\rho} \right)=1.
\end{equation}
Thus, equations \eqref{eq_normal1} and \eqref{eq_s1} gives 
\begin{equation}
    J_{i_1 i_2 \cdots i_l}^{\left(l=even\right)} = \frac{\left< \Omega_{i_1 i_2 \cdots i_{l}}^{ \left(l\right)} \right>}{\left<\Omega^{\left(0 \right)} \right>} \rho.
\end{equation}
\end{proof}

\begin{corollary}\label{isotropic}
    If the scattering kernel $p\left(\vec{\Omega }^{\prime} \rightarrow \vec{\Omega } \right)$ contains only even-order terms, the scattering process preserves the isotropic distribution of photons.
\end{corollary}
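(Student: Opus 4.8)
The plan is to show that an isotropic radiation field is a fixed point of the scattering operator: if $I$ does not depend on $\vec{\Omega}$, then the scattering source satisfies $S=I$, so the collision term $S-I$ on the right-hand side of \eqref{Equation_2_1_} vanishes and scattering does not alter the angular shape of $I$. Thus the whole argument reduces to evaluating $S$ on an isotropic input using the identities already established.

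First I would unpack the hypothesis. ``$p$ contains only even-order terms'' means $C_{l}^{\ast}=0$ for every odd $l$, so by Lemma \ref{source} the source collapses to $S=\sum_{l=\mathrm{even}}^{L} C_{l}^{\ast}\sum_{i_1\cdots i_l}\Omega_{i_1\cdots i_l}^{(l)}\,J_{i_1\cdots i_l}^{(l)}$, an expression that involves only \emph{even}-order angular moments of $I$. This is precisely the point where the hypothesis does its work: it guarantees that every moment appearing in $S$ is of even order, so that Lemma \ref{even_to_rho}, which controls only even-order moments, applies term by term and the uncontrolled odd-order moments never enter.

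Next I would substitute $J_{i_1\cdots i_l}^{(l)}=\langle\Omega_{i_1\cdots i_l}^{(l)}\rangle\,\rho/\langle\Omega^{(0)}\rangle$ from Lemma \ref{even_to_rho} into the reduced expression for $S$ and pull the scalar factor $\rho/\langle\Omega^{(0)}\rangle$ out of the sums. What is left inside is $\sum_{l=\mathrm{even}}^{L} C_{l}^{\ast}\sum_{i_1\cdots i_l}\langle\Omega_{i_1\cdots i_l}^{\prime(l)}\rangle\,\Omega_{i_1\cdots i_l}^{(l)}$, which is exactly $\int p(\vec{\Omega}^{\prime}\rightarrow\vec{\Omega})\,d\vec{\Omega}^{\prime}=1$ by the normalization identity \eqref{eq_normal}. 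Hence $S=\rho/\langle\Omega^{(0)}\rangle$ identically in $\vec{\Omega}$, so $S$ is isotropic; and since an isotropic $I$ necessarily equals its own angular average $\rho/\langle\Omega^{(0)}\rangle$, we conclude $S=I$ and therefore $S-I=0$, which is the claim.

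I do not expect a genuine obstacle: the corollary is essentially bookkeeping layered on Lemmas \ref{source} and \ref{even_to_rho}. The only steps needing a little care are recognizing that the residual sum after the substitution is literally the left-hand side of the normalization relation \eqref{eq_normal} with the primed and unprimed directions interchanged (harmless, since both are integrated out, so $\langle\Omega_{i_1\cdots i_l}^{\prime(l)}\rangle=\langle\Omega_{i_1\cdots i_l}^{(l)}\rangle$), and confirming that the odd-order contributions, which Lemma \ref{even_to_rho} does not govern, have already been eliminated by the hypothesis so that nothing is left uncovered.
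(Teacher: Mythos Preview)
Your proposal is correct and follows essentially the same route as the paper: restrict the sum in Lemma~\ref{source} to even $l$, replace each $J_{i_1\cdots i_l}^{(l)}$ by $\langle\Omega_{i_1\cdots i_l}^{(l)}\rangle\,\rho/\langle\Omega^{(0)}\rangle$ via Lemma~\ref{even_to_rho}, and collapse the remaining angular sum to $1$ using the normalization identity~\eqref{eq_normal}, yielding $S=\rho/\langle\Omega^{(0)}\rangle$. Your fixed-point framing and the explicit verification that $S-I=0$ for isotropic $I$ are minor glosses absent from the paper, but the mathematical core is identical.
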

\begin{proof}
If the scattering kernel \eqref{eq_legendre} contains only even-order terms, the scattering source $S$ satisfies
\begin{equation}\footnotesize
\begin{aligned}
 S =& \sum_{l=even}^{L} C_{l}^{\ast} \sum_{i_1 i_2 \cdots i_l} \left( \Omega_{i_1 i_2 \cdots i_l}^{\left(l\right)} J_{i_1 i_2 \cdots i_l}^{\left(l\right)} \right) \\
= &\frac{\rho}{\left<\Omega^{\left(0\right)} \right>} 
\sum_{l=even}^{L} C_{l}^{\ast} \sum_{i_1 i_2 \cdots i_l} \left( \Omega_{i_1 i_2 \cdots i_l}^{\left(l\right)} 
\left<\Omega_{i_1 i_2 \cdots i_l}^{\left(l\right)} \right> \right).
\end{aligned}
\end{equation}
The normalization property \eqref{eq_normal} states that $S =  {\rho}/{\left<\Omega^{\left(0\right)} \right>}$,
which shows that the scattering process preserves the isotropic distribution of photons.
\end{proof}

Lemma \ref{source}, Proposition \ref{S_moments}, Lemma \ref{even_to_rho}, and Corollary \ref{isotropic} 
hold for arbitrary order $L$ of collision kernel \eqref{eq_legendre} and 
all flow regimes.
In the following, we close the energy flux $J^{\left(1 \right)}$ in the optically thick regime.
According to the asymptotic theory \cite{chapman1990mathematical},
the radiant intensity can be formally expanded as
\begin{equation} \label{Equation_2_12_}
I = S - \varepsilon \frac{1}{\sigma _{s}} \vec{\Omega }\cdot \nabla S+O(\varepsilon^2).
\end{equation}
According to Proposition \ref{S_moments}, the radiant energy flux $J^{\left(1 \right)}$ has the form
\begin{equation} \label{Equation_2_14_}\footnotesize
\begin{aligned}
 J^{\left(1 \right)} =&
S^{\left(1 \right)} - \varepsilon \frac{1}{\sigma _{s}} \nabla\cdot  S^{\left(2 \right)}+O(\varepsilon^2)\\
=& \sum_{l=odd}^{L} C_{l}^{\ast} \sum_{i_1 i_2 \cdots i_l}^{l=odd} \left(  \left<\Omega_{}^{\left(1\right)} \Omega_{i_1 i_2 \cdots i_l}^{\left(l\right)} \right> J_{i_1 i_2 \cdots i_l}^{\left(l\right)} \right) \\
&- \varepsilon \frac{1}{\sigma _{s}} \nabla\cdot \sum_{l=even}^{L} C_{l}^{\ast} \sum_{i_1 i_2 \cdots i_l}^{l=even} \left( \left<\Omega_{}^{\left(2\right)} \Omega_{i_1 i_2 \cdots i_l}^{\left(l\right)} \right> J_{i_1 i_2 \cdots i_l}^{\left(l\right)} \right) 
\end{aligned}
\end{equation}
To close the radiant energy flux $J^{\left(1 \right)}$ up to order $O(\varepsilon)$, 
we need to calculate the $O(1)$ and $O(\varepsilon)$ contributions of $J^{\left(n=odd \right)}$ and
the $O(\varepsilon)$ contribution of $J^{\left(m=even \right)}$ to $J^{\left(1 \right)}$, i.e.,
\begin{equation} \label{Equation_2_11_}\footnotesize
\left\{
\begin{aligned}
&J^{\left(m=even \right)} = S^{\left(m \right)} - \varepsilon \frac{1}{\sigma _{s}} \nabla\cdot  S^{\left(m+1 \right)} \approx S^{\left(m \right)} , \\
& J^{\left(n=odd \right)} = S^{\left(n \right)} - \varepsilon \frac{1}{\sigma _{s}} \nabla\cdot  S^{\left(n+1 \right)} = S^{\left(n \right)} - \varepsilon \frac{1}{\sigma _{s}} \nabla\cdot  J^{\left(n+1 \right)} .
\end{aligned}
\right.
\end{equation}

\begin{lemma} \label{odd_to_J1}
The following recurrence relation holds for arbitrary odd-order angular moments,
\begin{equation} \label{Equation_2_15}\footnotesize
\left\{
\begin{aligned}
& \left.
J_{i_1 i_2 \cdots i_l}^{\left(l=odd\right)} = \frac{ \left<\mu \Omega_{i_1 i_2 \cdots i_l}^{\left(l\right)} \right>}{\left<\mu^{2} \right>} J_{x}^{(1)}
\right|_{i_1 i_2 \cdots i_l = x^{odd}y^{even}z^{even}}, \\
& \left.
J_{i_1 i_2 \cdots i_l}^{\left(l=odd\right)} = \frac{ \left<\xi \Omega_{i_1 i_2 \cdots i_l}^{\left(l\right)} \right>}{\left<\mu^{2} \right>} J_{y}^{(1)}
\right|_{i_1 i_2 \cdots i_l = x^{even}y^{odd}z^{even}}, \\
& \left.
J_{i_1 i_2 \cdots i_l}^{\left(l=odd\right)} = \frac{ \left<\zeta \Omega_{i_1 i_2 \cdots i_l}^{\left(l\right)} \right>}{\left<\mu^{2} \right>} J_{z}^{(1)}
\right|_{i_1 i_2 \cdots i_l = x^{even}y^{even}z^{odd}}, \\
& \left.
J_{i_1 i_2 \cdots i_l}^{\left(l=odd\right)} = 0
\right|_{i_1 i_2 \cdots i_l = x^{odd}y^{odd}z^{odd}}\\
\end{aligned}
\right.
\end{equation}
\end{lemma}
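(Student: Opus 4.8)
The plan is to show that, to the order relevant for the flux closure, the radiant intensity reduces to an isotropic part plus a term linear in $\vec\Omega$, so that every odd-order angular moment is just the corresponding moment of a single dipole vector; the four cases in \eqref{Equation_2_15} then fall out of elementary angular algebra together with the parity rule $\langle\mu^{r}\xi^{s}\zeta^{t}\rangle=0$ whenever one of $r,s,t$ is odd, which was already used in Proposition \ref{S_moments}.

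First I would pin down the angular content of $I$ up to $O(\varepsilon^{2})$. Writing $I=I_{0}+\varepsilon I_{1}+O(\varepsilon^{2})$, the leading balance in \eqref{Equation_2_1_} forces $I_{0}$ to be a fixed point of the linear scattering map $I\mapsto S$, hence isotropic: the normalization \eqref{eq_normal} makes $l=0$ the only Legendre mode with unit eigenvalue for a non-degenerate scattering kernel. Therefore the even-in-$\vec\Omega$ part of $S$ is isotropic at leading order --- this is exactly the computation in the proof of Corollary \ref{isotropic}, using Lemma \ref{even_to_rho} --- so the driving term $-\tfrac{\varepsilon}{\sigma_{s}}\vec\Omega\cdot\nabla S$ in \eqref{Equation_2_12_} is, to leading order, proportional to $\vec\Omega\cdot\nabla\rho$ and thus linear in $\vec\Omega$. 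Since the scattering map acts diagonally on the degree of the angular polynomial, the $O(\varepsilon)$ equation satisfied by $I_{1}$ has the pure-dipole right-hand side $-\tfrac1{\sigma_{s}}\vec\Omega\cdot\nabla I_{0}$, so its odd part is again a pure dipole; equivalently, $J^{(n)}$ for odd $n\ge 3$ carries no content independent of $J^{(1)}$. Collecting all even-in-$\vec\Omega$ contributions through $O(\varepsilon)$ into one isotropic function $I_{\mathrm{iso}}$, this yields $I=I_{\mathrm{iso}}+\varepsilon\,\vec\Omega\cdot\vec b+O(\varepsilon^{2})$ with a space-dependent vector $\vec b=(b_{x},b_{y},b_{z})$.

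Second, for any odd $l$ I would take the $l$th angular moment of this representation. The isotropic part contributes nothing to an odd moment, so
\[
J_{i_{1}\cdots i_{l}}^{(l)}=\varepsilon\big(b_{x}\langle\mu\,\Omega_{i_{1}\cdots i_{l}}^{(l)}\rangle+b_{y}\langle\xi\,\Omega_{i_{1}\cdots i_{l}}^{(l)}\rangle+b_{z}\langle\zeta\,\Omega_{i_{1}\cdots i_{l}}^{(l)}\rangle\big)+O(\varepsilon^{2}).
\]
The parity rule selects at most one term: if the multi-index $\{i_{1}\cdots i_{l}\}$ contains $x$ an odd number of times and $y,z$ even numbers of times, only $\langle\mu\,\Omega^{(l)}\rangle$ survives, and likewise when $y$ or $z$ is the coordinate of odd multiplicity; if $x,y,z$ all occur an odd number of times then all three integrals vanish, which is the last line of \eqref{Equation_2_15}. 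Finally, specializing to $l=1$ gives $J_{x}^{(1)}=\varepsilon\langle\mu^{2}\rangle b_{x}+O(\varepsilon^{2})$ and similarly for $y,z$, while rotational invariance of the angular measure gives $\langle\mu^{2}\rangle=\langle\xi^{2}\rangle=\langle\zeta^{2}\rangle$; substituting $\varepsilon b_{x}=J_{x}^{(1)}/\langle\mu^{2}\rangle+O(\varepsilon^{2})$, etc., into the surviving term of each parity case reproduces \eqref{Equation_2_15}, up to an $O(\varepsilon^{2})$ error immaterial for an $O(\varepsilon)$ closure.

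I expect the first step to be the main obstacle, namely rigorously excluding angular content above the dipole in the $O(\varepsilon)$ correction. The clean justification is the degree-block-diagonality of the scattering map together with the leading-order isotropy of the even part of $S$; an alternative that stays inside the moment hierarchy is to combine $J^{(n=odd)}=S^{(n)}-\tfrac{\varepsilon}{\sigma_{s}}\nabla\cdot J^{(n+1)}$ from \eqref{Equation_2_11_} with Lemma \ref{even_to_rho} for the even moment $J^{(n+1)}$, turning the relations into a closed linear system for the odd moments forced by $\nabla\rho$, and then to eliminate $\nabla\rho$ through the $n=1$ instance --- but that route must verify the system is nonsingular, which is again where non-degeneracy of the kernel enters. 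Once the dipole representation is available, the rest is routine bookkeeping.
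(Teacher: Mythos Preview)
Your proposal is correct and takes a genuinely different route from the paper. The paper stays entirely in moment space: it feeds \eqref{Equation_2_11_}, Proposition~\ref{S_moments}, and Lemma~\ref{even_to_rho} into one another to obtain a closed linear system \eqref{Equation_2_17_} for the components of $J^{(l=\mathrm{odd})}$ driven by $\nabla\rho$, notes that its solution is componentwise proportional to $J^{(1)}$, and then reads off the exact constant in \eqref{Equation_2_19_} by dividing the $l$th instance of $J^{(n)}-S^{(n)}=-\tfrac{\varepsilon}{\sigma_s}\nabla\cdot J^{(n+1)}$ by the $n=1$ instance. This is precisely the ``alternative that stays inside the moment hierarchy'' you sketch in your final paragraph. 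Your main argument instead returns to the angular variable: the block-diagonality of the scattering operator in the Legendre basis forces the odd-in-$\vec\Omega$ part of $I$ to be a pure dipole $\varepsilon\,\vec\Omega\cdot\vec b$ through $O(\varepsilon)$, after which \eqref{Equation_2_15} is nothing but the angular moments of $\vec\Omega\cdot\vec b$ together with the parity rule and $\varepsilon b_x=J_x^{(1)}/\langle\mu^2\rangle$. Your route makes the dipole structure explicit and isolates the spectral hypothesis (no unit eigenvalue of the scattering map above $l=0$) cleanly; the paper's route never discusses the angular shape of $I$ and extracts the proportionality constant without explicitly inverting the system, at the price of a terser argument whose solvability rests on the same non-degeneracy you already flagged.
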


\begin{proof}
According to equation \eqref{Equation_2_11_}, Lemma \ref{S_moments}, Proposition \ref{S_moments} and Lemma \ref{even_to_rho}, 
the components of odd-order moments $J^{(l=odd)}$ follow the equation system \eqref{Equation_2_17_},
\begin{equation}\label{Equation_2_17_}\footnotesize
\begin{aligned}
&\left[\delta_{ql} -  C_{l}^{\ast} \left<\Omega_{i_1 i_2 \cdots i_q}^{\left(q\right)} \Omega_{i_1 i_2 \cdots i_l}^{\left(l\right)} \right> \right] \left(J_{i_1 i_2 \cdots i_l}^{\left(l\right)} \right) = \\
&\left\{
\begin{aligned}
&\left. 
-\varepsilon \frac{1}{\sigma _{s}} \frac{ \left<\mu \Omega_{i_1 i_2 \cdots i_q}^{\left(q \right)} \right> }{\left<\Omega^{(0)} \right>} \frac{\partial}{\partial x} \rho
\right|_{ \begin{subarray}{l}
i_1 i_2 \cdots i_q = x^{odd}y^{even}z^{even} \\
i_1 i_2 \cdots i_l = x^{odd}y^{even}z^{even}
\end{subarray} },  \\
&
\left. 
-\varepsilon \frac{1}{\sigma _{s}} \frac{ \left<\xi \Omega_{i_1 i_2 \cdots i_q}^{\left(q \right)} \right> }{\left<\Omega^{(0)} \right>} \frac{\partial}{\partial y} \rho
\right|_{ \begin{subarray}{l}
i_1 i_2 \cdots i_q = x^{even}y^{odd}z^{even} \\
i_1 i_2 \cdots i_l = x^{even}y^{odd}z^{even}
\end{subarray} },  \\ 
&
\left. 
-\varepsilon \frac{1}{\sigma _{s}} \frac{ \left<\zeta \Omega_{i_1 i_2 \cdots i_q}^{\left(q \right)} \right> }{\left<\Omega^{(0)} \right>} \frac{\partial}{\partial z} \rho
\right|_{ \begin{subarray}{l}
i_1 i_2 \cdots i_q = x^{even}y^{even}z^{odd} \\
i_1 i_2 \cdots i_l = x^{even}y^{even}z^{odd}
\end{subarray} },  \\
& 
\left. 0 \right|_{\begin{subarray}{l}
i_1 i_2 \cdots i_q = x^{odd}y^{odd}z^{odd} \\
i_1 i_2 \cdots i_l = x^{odd}y^{odd}z^{odd}
\end{subarray}},  \\
\end{aligned}
\right.    
\end{aligned}
\end{equation}
where $\delta_{ql} = 1$ 
if $\Omega_{i_1 i_2 \cdots i_q}^{\left(q\right)} = \Omega_{i_1 i_2 \cdots i_l}^{\left(l\right)}$
and $\delta_{ql} = 0$ 
if $\Omega_{i_1 i_2 \cdots i_q}^{\left(q\right)} \neq \Omega_{i_1 i_2 \cdots i_l}^{\left(l\right)}$.
Solving equation system \eqref{Equation_2_17_} implies that
\begin{equation}
J_{i_1 i_2 \cdots i_l}^{\left(l\right)} \propto
\left\{
\begin{aligned}
& \left. J_{x}^{(1)} \right|_{i_1 i_2 \cdots i_l = x^{odd}y^{even}z^{even}} \\
& \left. J_{y}^{(1)} \right|_{i_1 i_2 \cdots i_l = x^{even}y^{odd}z^{even}}\\
& \left. J_{z}^{(1)} \right|_{i_1 i_2 \cdots i_l = x^{even}y^{even}z^{odd}}\\
& \left. 0 \right|_{i_1 i_2 \cdots i_l = x^{odd}y^{odd}z^{odd}} \\
\end{aligned}
\right..
\end{equation}
According to Lemma \ref{even_to_rho} and equation \eqref{Equation_2_11_}, 
the relationship between $J_{i_1 i_2 \cdots i_l}^{\left(l=odd\right)} $ and $J_{x}^{(1)}$ could be obtained
\begin{equation}\label{Equation_2_19_}
\begin{aligned}
&\frac{J_{i_1 i_2 \cdots i_l}^{\left(l=odd\right)}}{J_{x}^{(1)}} =
\frac{S_{i_1 i_2 \cdots i_l}^{\left(l=odd\right)}}{S_{x}^{\left(1\right)}} =
\frac{\frac{\partial}{\partial x} J_{xi_1 i_2 \cdots i_l}^{\left(l+1\right)}}{\frac{\partial}{\partial x} J_{xx}^{\left(2\right)}} 
= \frac{ \left<\mu \Omega_{i_1 i_2 \cdots i_l}^{\left(l \right)} \right> }{\left<\mu^{2} \right>},
\end{aligned}
\end{equation}
for $i_1 i_2 \cdots i_l = x^{odd}y^{even}z^{even}$.
Similar relation can be derived for $i_1 i_2 \cdots i_l = x^{even}y^{odd}z^{even}$ and $i_1 i_2 \cdots i_l = x^{even}y^{even}z^{odd}$.
\end{proof}

\begin{theorem} \label{diffusion}
The asymptotic equation for the radiative transfer equation \eqref{Equation_2_1_} 
in the optically thick limiting regime is
\begin{equation} \label{Equation_2_20_}
\frac{\partial \rho}{\partial t} - \nabla\cdot D \nabla \rho = 0, 
\end{equation}
where the diffusion coefficient follows
\begin{equation}\label{eq_diffcoef}
    D=\frac{c}{3\left(1 - \overline{\cos\theta} \right) \sigma_s}, 
\end{equation}
and $\overline{\cos\theta}$ is the average cosine of the scattering angle.
\end{theorem}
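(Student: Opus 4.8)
The plan is to turn the kinetic equation \eqref{Equation_2_1_} into a closed scalar equation for $\rho$ by pairing an exact conservation law with the flux closure built above. First I would take the zeroth angular moment of \eqref{Equation_2_1_}. Since the anisotropic scattering conserves energy, $S^{(0)}=\rho$ (the identity established inside the proof of Lemma \ref{even_to_rho}, eq.~\eqref{Equation_2_9_}), the collision term integrates to zero and one obtains the exact continuity relation
\begin{equation}
\frac{\partial \rho}{\partial t} + \frac{c}{\varepsilon}\,\nabla\cdot J^{(1)} = 0 .
\end{equation}
The whole problem then collapses to expressing $J^{(1)}$ in terms of $\rho$ up to and including order $O(\varepsilon)$: this is precisely the order at which $(c/\varepsilon)\,\nabla\cdot J^{(1)}$ contributes an $O(1)$ diffusion term, whereas the $O(\varepsilon^{2})$ remainder gives only an $O(\varepsilon)$ correction that vanishes in the limit.

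Next I would insert the Chapman--Enskog ansatz \eqref{Equation_2_12_} and use Proposition \ref{S_moments} to reach \eqref{Equation_2_14_}, i.e.\ $J^{(1)} = S^{(1)} - \varepsilon\,\sigma_s^{-1}\,\nabla\cdot S^{(2)} + O(\varepsilon^{2})$, and evaluate the two contributions separately. For the $O(1)$ term, Lemma \ref{odd_to_J1} collapses every odd-order moment occurring in $S^{(1)}$ onto the three Cartesian components of $J^{(1)}$; collecting the coefficients via \eqref{Equation_2_19_} and recognising the resulting algebraic combination of the $C_{l}^{\ast}$ as the first Legendre moment of the phase function, one arrives at $S^{(1)} = \overline{\cos\theta}\,J^{(1)}$ with $\overline{\cos\theta} = \int p(\vec{\Omega}^{\prime}\to\vec{\Omega})\,(\vec{\Omega}^{\prime}\cdot\vec{\Omega})\,d\vec{\Omega}^{\prime}$. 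For the $O(\varepsilon)$ term, note that $S^{(2)} = J^{(2)} + O(\varepsilon)$ by \eqref{Equation_2_11_}, and Lemma \ref{even_to_rho} (with $l=2$) gives $J^{(2)}_{ij} = \langle\Omega_{i}\Omega_{j}\rangle\,\rho/\langle\Omega^{(0)}\rangle = \tfrac{1}{3}\rho\,\delta_{ij}$ in three dimensions, so $\nabla\cdot S^{(2)} = \tfrac{1}{3}\nabla\rho$. All remaining moments feed into $J^{(1)}$ only at $O(\varepsilon^{2})$ and are dropped.

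Substituting these two evaluations back into \eqref{Equation_2_14_} yields the self-consistent relation $(1-\overline{\cos\theta})\,J^{(1)} = -\varepsilon\,(3\sigma_s)^{-1}\nabla\rho + O(\varepsilon^{2})$, hence
\begin{equation}
J^{(1)} = -\,\frac{\varepsilon}{3\,(1-\overline{\cos\theta})\,\sigma_s}\,\nabla\rho + O(\varepsilon^{2}).
\end{equation}
Plugging this into the continuity relation, the prefactors $1/\varepsilon$ and $\varepsilon$ cancel, the $O(\varepsilon^{2})$ term drops, and one obtains exactly \eqref{Equation_2_20_} with the diffusion coefficient \eqref{eq_diffcoef} (and $D>0$ for any admissible kernel, since $\overline{\cos\theta}<1$). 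As a consistency check, the isotropic limit $\overline{\cos\theta}=0$ reproduces the classical result $D = c/(3\sigma_s)$.

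The step I expect to be the real obstacle is the identification $S^{(1)} = \overline{\cos\theta}\,J^{(1)}$, which rests on two points that the preceding lemmas only set up implicitly. First, the linear system \eqref{Equation_2_17_} must genuinely possess the ``rank one per spatial axis'' solution structure asserted in Lemma \ref{odd_to_J1}, so that $S^{(1)}_x$ depends on $J^{(1)}_x$ alone (and likewise for $y$, $z$); this requires the matrix $\delta_{ql} - C_{l}^{\ast}\langle\Omega^{(q)}_{i_1\cdots i_q}\Omega^{(l)}_{i_1\cdots i_l}\rangle$ to be invertible on the relevant odd block, which one would justify by a Legendre-projection argument on the coefficients $C_{l}^{\ast}$. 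Second, the scalar emerging after the substitution must coincide exactly with the dipole moment of $p$; this amounts to a bookkeeping identity between the $C_{l}^{\ast}$ weights and the angular averages $\langle\mu\,\Omega^{(l)}_{i_1\cdots i_l}\rangle/\langle\mu^{2}\rangle$ appearing in Lemma \ref{odd_to_J1}. Alongside this, I would keep careful track of the $O(\varepsilon)$ accounting --- confirming that the first-order corrections to the even moments and the $\nabla\cdot S^{(2)}$ correction to the odd moments enter $J^{(1)}$ only at $O(\varepsilon^{2})$ --- but that part becomes routine once the leading-order closure is pinned down.
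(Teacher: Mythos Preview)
Your plan is correct and follows essentially the same route as the paper: take the zeroth moment to get the continuity law, use the Chapman--Enskog relation \eqref{Equation_2_14_}, reduce $S^{(1)}$ to $\overline{\cos\theta}\,J^{(1)}$ via Lemma~\ref{odd_to_J1}, and evaluate the $O(\varepsilon)$ piece through Lemma~\ref{even_to_rho}. The ``bookkeeping identity'' you flag as the obstacle is precisely what the paper supplies in \eqref{Equation_2_21_}--\eqref{Equation_2_22_}, namely $\langle(\cos\theta)^{l}\rangle = \sum_{i_{1}\cdots i_{l-1}} \langle\mu\,\Omega^{(l-1)}_{i_{1}\cdots i_{l-1}}\rangle^{2}/\langle\mu^{2}\rangle$, after which the coefficient in front of $J^{(1)}$ collapses to $\sum_{l\;\mathrm{odd}} C_{l}^{\ast}\langle(\cos\theta)^{l+1}\rangle = \overline{\cos\theta}$.
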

\begin{proof}
According to the normalization condition of phase function and \eqref{eq_normal1}, we have
\begin{equation} \label{Equation_2_21_}
\sum_{l=even}^{L} C_{l}^{\ast} \left< \left( \cos\theta \right)^{l} \right> =1
= \sum_{l=even}^{L} C_{l}^{\ast} \sum_{i_1 i_2 \cdots i_l}^{l=even} \left( \frac{\left<\Omega_{i_1 i_2 \cdots i_l}^{\left(l\right)} \right>^{2}}{\left<\Omega^{\left(0\right)} \right>} \right)
\end{equation}
which implies
\begin{equation}
    \left< \left( \cos\theta \right)^{l} \right> =
 \sum_{i_1 i_2 \cdots i_l}^{l=even} \left( \frac{\left<\Omega_{i_1 i_2 \cdots i_l}^{\left(l\right)} \right>^{2}}{\left<\Omega^{\left(0\right)} \right>} \right).
\end{equation}
Furthermore,
\begin{equation} \label{Equation_2_22_}\footnotesize
\begin{aligned}
& \left< \left( \cos\theta \right)^{l} \right> = \sum_{i_1 i_2 \cdots i_l}^{l=even} \left(  \frac{\left<\Omega_{i_1 i_2 \cdots i_l}^{\left(l\right)} \right>^{2}}{\left<\Omega^{\left(0\right)} \right>} \right)  
 = \sum_{i_1 i_2 \cdots i_{l-1}}^{l=even} \left( \frac{\left<\mu \Omega_{i_1 i_2 \cdots i_{l-1}}^{\left(l-1\right)} \right>^{2}}{\left<\Omega^{\left(0\right)} \right>} \right) \\
& + \sum_{i_1 i_2 \cdots i_{l-1}}^{l=even} \left( \frac{\left<\xi \Omega_{i_1 i_2 \cdots i_{l-1}}^{\left(l-1\right)} \right>^{2}}{\left<\Omega^{\left(0\right)} \right>} \right) +
\sum_{i_1 i_2 \cdots i_{l-1}}^{l=even} \left( \frac{\left<\zeta \Omega_{i_1 i_2 \cdots i_{l-1}}^{\left(l-1\right)} \right>^{2}}{\left<\Omega^{\left(0\right)} \right>} \right) \\
& = 3 \sum_{i_1 i_2 \cdots i_{l-1}}^{l=even} \left( \frac{\left<\mu \Omega_{i_1 i_2 \cdots i_{l-1}}^{\left(l-1\right)} \right>^{2}}{\left<\Omega^{\left(0\right)} \right>} \right) =
\sum_{i_1 i_2 \cdots i_{l-1}}^{l=even} \left( \frac{\left<\mu \Omega_{i_1 i_2 \cdots i_{l-1}}^{\left(l-1\right)} \right>^{2}}{\left<\mu^{2} \right>} \right).
\end{aligned}
\end{equation}
Thus based on Proposition \ref{S_moments}, equation \eqref{Equation_2_11_}, Lemma \ref{even_to_rho}, Lemma \ref{odd_to_J1} and equation \eqref{Equation_2_22_}, the energy flux $J_{x}^{(1)}$ can be derived as
\begin{equation} \label{Equation_2_23_}\footnotesize
\begin{aligned}
& J_{x}^{(1)} = \sum_{l=odd}^{L} C_{l}^{\ast} \sum_{i_1 i_2 \cdots i_{l}}^{l=odd} \left(  \left<\mu \Omega_{ i_1 i_2 \cdots i_{l}}^{\left(l\right)} \right> J_{i_1 i_2 \cdots i_{l}}^{\left(l\right)} \right)
 - \varepsilon \frac{1}{\sigma _{s}}  \frac{\partial}{\partial x} J_{xx}^{\left(2\right)}  \\
& = J_{x}^{(1)} \sum_{l=odd}^{L} C_{l}^{\ast} \sum_{i_1 i_2 \cdots i_{l}}^{l=odd} \left( \frac{\left<\mu \Omega_{i_1 i_2 \cdots i_{l}}^{\left(l\right)} \right>^{2}}{\left<\mu^{2} \right>} \right)
 - \varepsilon \frac{1}{\sigma _{s}} \frac{\left<\mu^{2} \right>}{\left<\Omega^{(0)} \right>} \frac{\partial}{\partial x} \rho \\
& = \overline{\cos\theta} J_{x}^{(1)} - \varepsilon \frac{1}{\sigma _{s}} \frac{1}{3} \frac{\partial}{\partial x} \rho ,
\end{aligned}
\end{equation}
which implies 
\begin{equation}\label{eq_flux}
J_{x}^{(1)} = - \varepsilon \frac{1}{3\left(1-\overline{\cos\theta} \right)\sigma _{s}} 
\frac{\partial}{\partial x} \rho,
\end{equation}
where $\overline{\cos\theta}$ is the average cosine of the scattering angle:
\begin{equation} \label{Equation_2_24_}
\overline{\cos\theta} = \int _{\vec{\Omega }^{\prime}} \cos\theta \cdot p\left(\vec{\Omega }^{\prime} \rightarrow \vec{\Omega } \right)d\vec{\Omega }^{\prime} = \sum_{l=odd}^{L} C_{l}^{\ast} \left< \left( \cos\theta \right)^{l+1} \right>.
\end{equation}
A routine calculation shows that $J_{y}^{(1)}$ and $J_{z}^{(1)}$ have the same form as $J_{x}^{(1)}$ in \eqref{Equation_2_23_} and \eqref{eq_flux}, which gives
\begin{equation} \label{Equation_2_25_}
J^{(1)} = - \varepsilon \frac{1}{3\left(1-\overline{\cos\theta} \right)\sigma _{s}} \nabla \rho.  
\end{equation}
The radiative transfer equation \eqref{Equation_2_1_} is reduced to 
a diffusion equation \eqref{Equation_2_20_} with the diffusive flux \eqref{Equation_2_25_} 
and the diffusion coefficient \eqref{eq_diffcoef}.
\end{proof}

\section{Physical interpretation} \label{physics}
The rigorous mathematical derivation is consistent with the physical picture of the anisotropic scattering process \cite{lee2020nuclear}.
The average scattering angle $\phi$ can be calculated by $\phi = \arccos \left(\overline{\cos\theta}\right)$.
Therefore, the particle transport mean free path $\lambda_{tr}$ can be estimated by
\begin{equation} \label{Equation_2_26_}
\begin{aligned}
 \lambda_{tr} & = \lambda_{s} + \lambda_{s} \cos\phi + \lambda_{s} \left( \cos\phi \right) ^{2} + \cdots  = \frac{\lambda_{s}}{1 - \overline{\cos\theta}},    
\end{aligned}
\end{equation}
where $\lambda_{s} = 1/{\sigma_{s}}$ is the scattering mean free path.
Analogy to the isotropic scattering, the diffusion coefficient is proportional to transport means free path, i.e.
\begin{equation}\label{eq_phyD}
\begin{aligned}
D = \frac{\lambda_{tr} c}{3} = \frac{c}{3\left(1 - \overline{\cos\theta} \right) \sigma_{s}},    
\end{aligned}
\end{equation}
which is consistent with Theorem \ref{diffusion}.
The reduced model \eqref{Equation_2_20_} is valid for arbitrary order of phase function, 
which reveals the diffusion physics of neutral particle transport in the asymptotic optically thick regime.

\section{Numerical experiments}\label{examples}
\begin{figure}
  \centering
  \subfigure[]{\includegraphics[width=0.45\textwidth]{./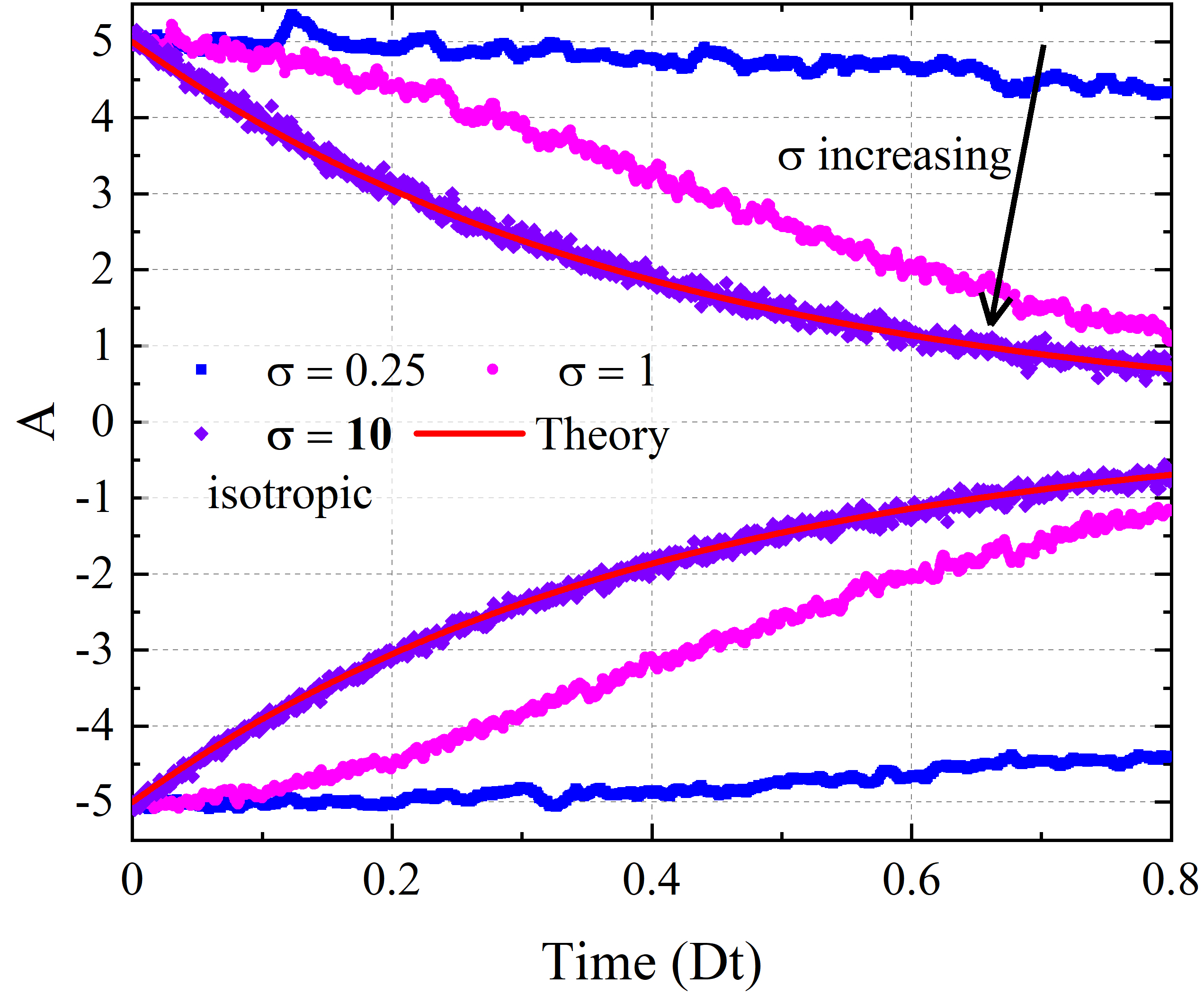}}
  \subfigure[]{\includegraphics[width=0.45\textwidth]{./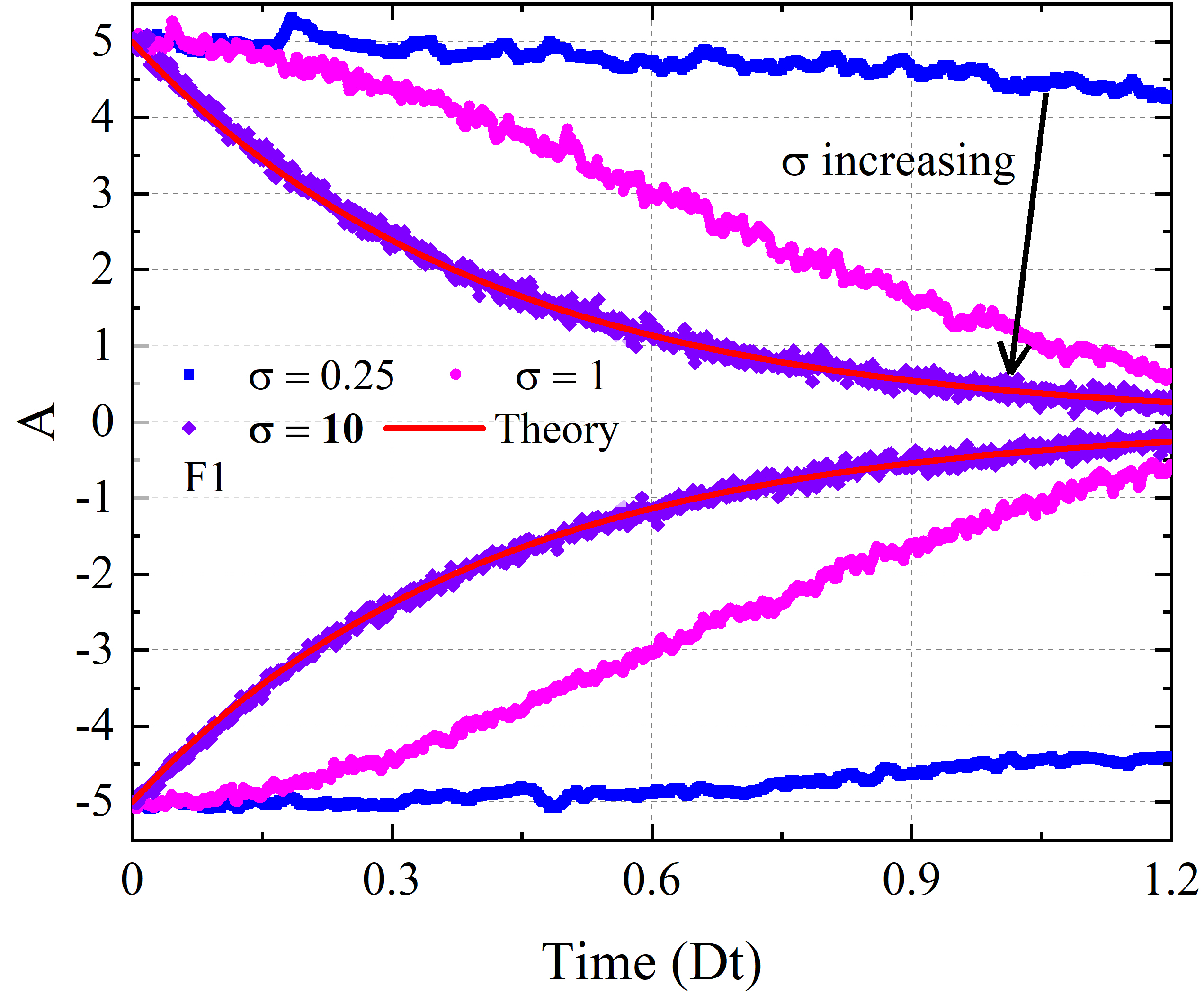}}
  \caption{The MC simulated results of the isotropic and F1 scattering cases under different $\sigma$, compared with the theory curve. (a) isotropic scattering, (b) F1 scattering.}
  \label{fig_image1}
\end{figure}

\begin{figure}
  \centering
  \includegraphics[width=0.45\textwidth]{./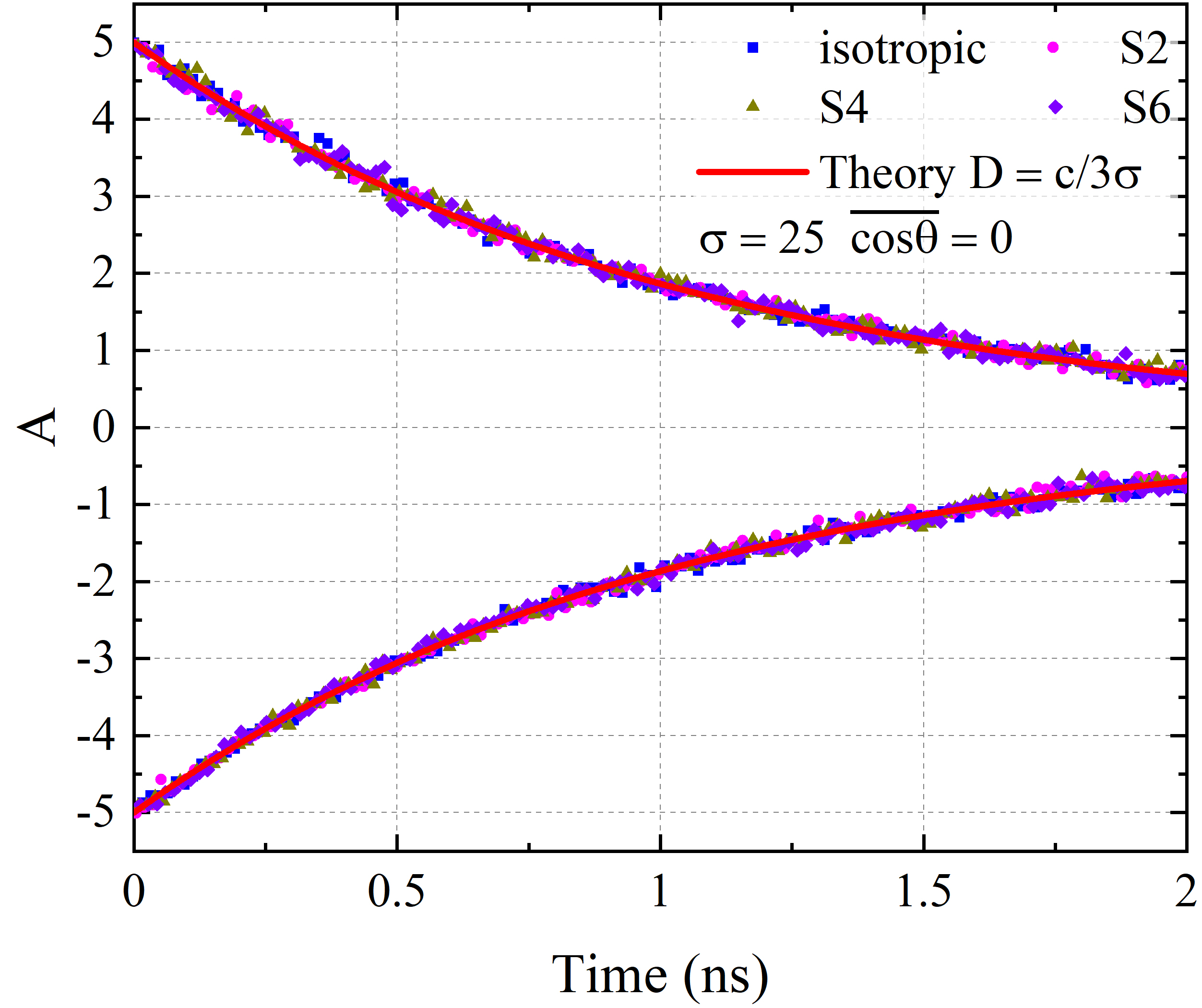}
  \caption{The comparison between the MC simulated results and the theory curve for different scattering phase functions with $\overline{\cos\theta} = 0$.}
  \label{fig_image2}
\end{figure}
\begin{figure}
  \centering
  \includegraphics[width=0.45\textwidth]{./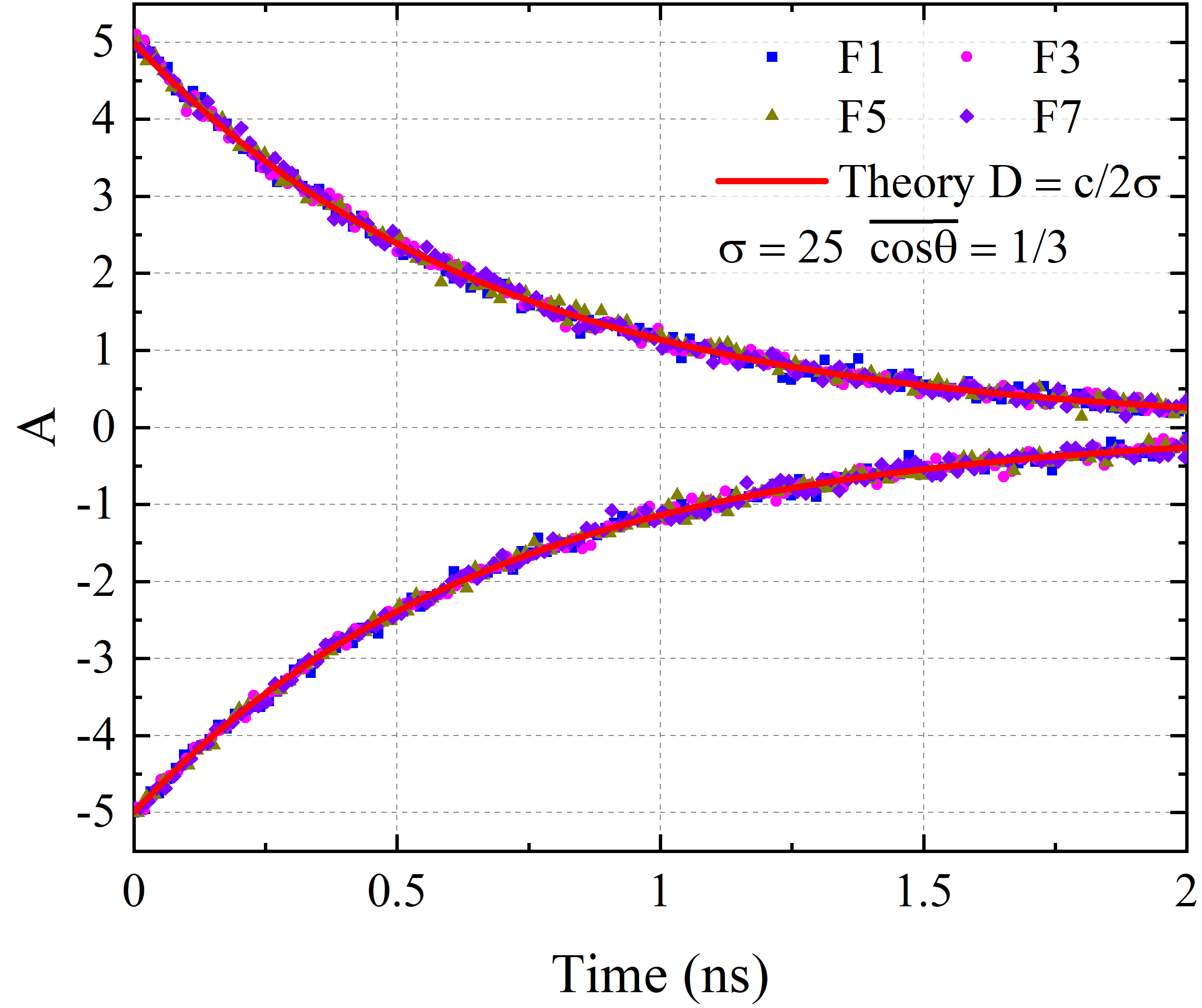}
  \caption{The comparison between the MC simulated results and the theory curve for different, forward scattering phase functions with $\overline{\cos\theta} = 1/3$.}
  \label{fig_image3}
\end{figure}
\begin{figure}
  \centering
  \includegraphics[width=0.45\textwidth]{./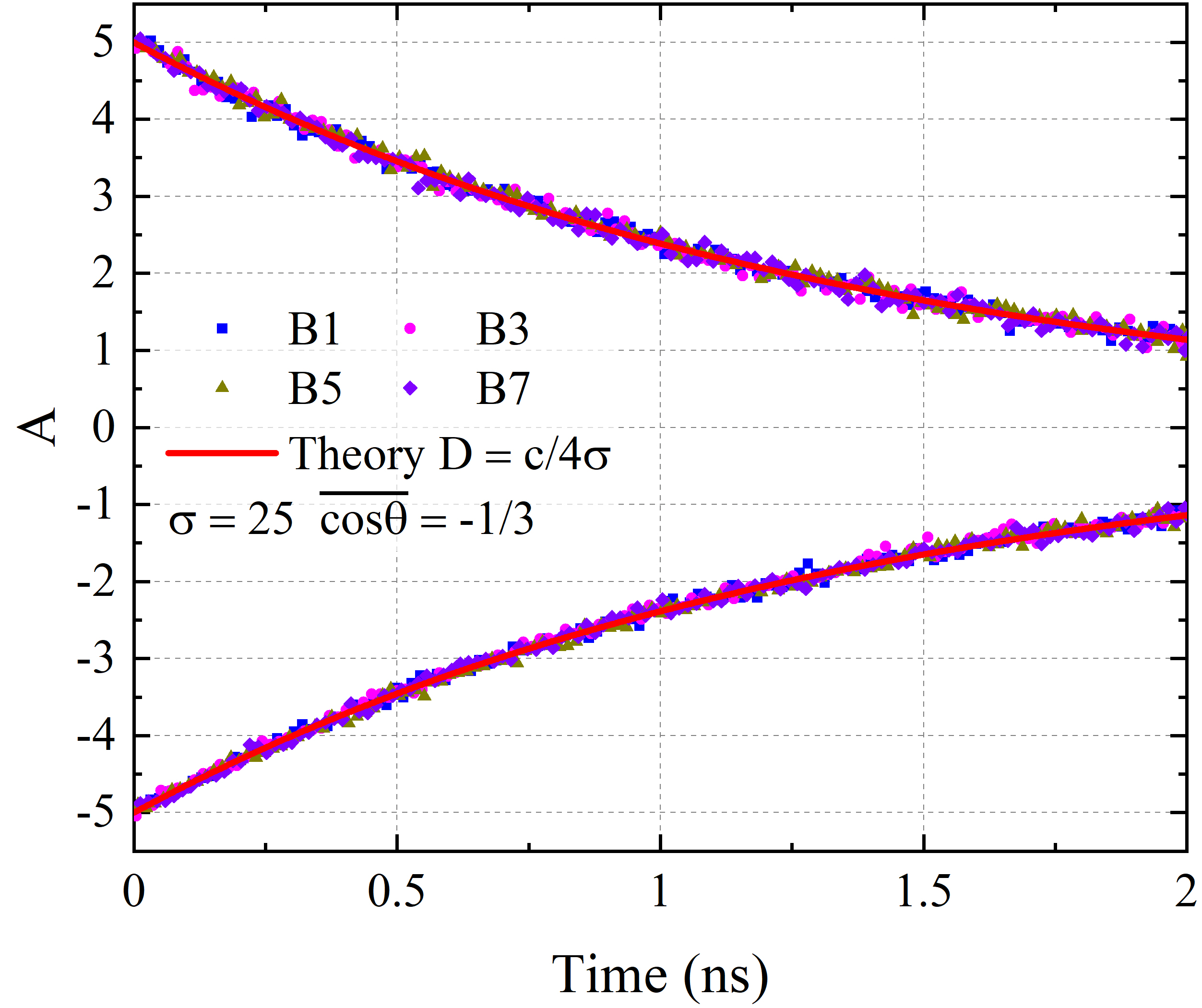}
  \caption{The comparison between the MC simulated results and the theory curve for different backward scattering phase functions with $\overline{\cos\theta} = -1/3$.}
  \label{fig_image4}
\end{figure}
Numerical experiments are presented to validate the Theorem \ref{diffusion}. 
A pure scattering process is considered in a one-dimensional geometry space with periodic boundary conditions.
The initial radiation energy distribution of $\rho$ is
$\rho = 10 + 5 \sin\left(\frac{\pi}{2} x\right), x\in \left[-2,2 \right]$. 
The geometry space has been divided into $10^3$ meshes, and the total simulated particles are $10^7$.
Based on \eqref{Equation_2_20_}, the theoretical prediction of the time evolution of the amplitude follows:
\begin{equation} \label{Equation_3_1_}
A\left(x,t \right) = \rho \left(x,t \right) - 10 = 5\sin\left(\frac{\pi}{2} x\right) e^{-D\left(\frac{\pi}{2} \right)^{2}t} .
\end{equation}
We compare the time evolution of amplitude at peak $A(-1,t)$ and valley $A(1,t)$ between the theoretical and numerical results.
The phase functions are listed in tables \ref{table1}-\ref{table3}, 
where $SL$, $FL$, and $BL$ stand for $L$-th order symmetrical, forward and backward scattering.
Figure \ref{fig_image1} compares the theoretical predictions and kinetic simulation results of the isotropic and F1 scattering with different $\sigma$, 
showing a clear convergence to the theoretical curve as $\sigma$ increases.
An excellent agreement is observed with $\sigma = 10$.
Various phase functions are tested as shown in tables (\ref{table1}-\ref{table3}).
The kinetic simulation results with scattering parameters
$\overline{\cos\theta} = 0$, 
$\overline{\cos\theta} = 1/3$ and 
$\overline{\cos\theta} = -1/3$, with 
$\sigma = 25$,
are given in Figure \ref{fig_image2}-\ref{fig_image4}.
Perfect agreement between the theoretical prediction and kinetic simulation data is observed, which validates our reduced model in the optically thick regime.

\begin{table}
    \centering
    \begin{tabular}{ c c c c }
    \hline
    \hline
     symbol & phase function & $\overline{\cos\theta}$ & D \\  \hline
     isotropic & $1/2$ & \multirow{4}{*}{$0$} & \multirow{4}{*}{$c/3\sigma$} \\  
     $S2$ & $3{\cos\theta}^2 /2$ & ~ & ~  \\  
     $S4$ & $5{\cos\theta}^4 /2$ & ~ & ~  \\  
     $S6$ & $7{\cos\theta}^6 /2$ & ~ & ~  \\  \hline \hline
    \end{tabular}
    \caption{The corresponding average cosine of the scattering angle and diffusion coefficient of symmetrical scattering phase functions.}
    \label{table1}
\end{table}

\begin{table} 
    \centering
    \begin{tabular}{ c c c c }
    \hline
    \hline
     symbol & Legendre coefficients & $\overline{\cos\theta}$ & D \\  \hline
     $F1$ & $C_{l=odd} = C_{l=even} = 1/2$ & \multirow{4}{*}{$1/3$} & \multirow{4}{*}{$c/2\sigma$} \\  
     $F3$ & $C_{l=odd} = C_{l=even} = 1/4$ & ~ & ~  \\  
     $F5$ & $C_{l=odd} = C_{l=even} = 1/6$ & ~ & ~  \\  
     $F7$ & $C_{l=odd} = C_{l=even} = 1/8$ & ~ & ~  \\  \hline \hline
    \end{tabular}
    \caption{The corresponding average cosine of the scattering angle and diffusion coefficient of forward scattering phase functions.}
    \label{table2}
\end{table}

\begin{table} 
    \centering
    \begin{tabular}{ c c c c }
    \hline
    \hline
     symbol & Legendre coefficients & $\overline{\cos\theta}$ & D \\  \hline
     $B1$ & $- C_{l=odd} = C_{l=even} = 1/2$ & \multirow{4}{*}{$-1/3$} & \multirow{4}{*}{$c/4\sigma$} \\  
     $B3$ & $- C_{l=odd} = C_{l=even} = 1/4$ & ~ & ~  \\  
     $B5$ & $- C_{l=odd} = C_{l=even} = 1/6$ & ~ & ~  \\  
     $B7$ & $- C_{l=odd} = C_{l=even} = 1/8$ & ~ & ~  \\  \hline \hline
    \end{tabular}
    \caption{The corresponding average cosine of the scattering angle and diffusion coefficient of backward scattering phase functions.}
    \label{table3}
\end{table}

\section{Conclusion}\label{conclusions}
In the transport process of neutral particle with anisotropic scattering, the even-order scattering kernel will not change the isotropy of the particle velocity distribution.
In the highly scattered regime, the high-order transport equation reduces to a low-order diffusion equation.
Based on rigorous mathematical derivations, the diffusion coefficient is derived for an arbitrary-order scattering kernel.
The numerical experiments show excellent agreement with the theoretical prediction, and a clear physical picture of diffusion is revealed.
The theoretical results in this letter contribute to the understanding and simulation of the neutral particle transport physics in the asymptotic optically thick regime.

\begin{acknowledgments}
We thank Dr. Yanli Wang from Beijing Computational Science Research Center for helpful discussions and for providing computational resources. 
Chang Liu is partially supported by the National Natural Science Foundation of China (12102061,12031001),
the National Key R\&D Program of China (2022YFA1004500),
the Presidential Foundation of the China Academy of Engineering Physics (YZJJZQ2022017),
and the National Key R\&D Program of China (2022YFA1004500).
\end{acknowledgments}


\end{document}